\newtheorem{theorem}{Theorem}
\newtheorem{lemma}{Lemma}
\newtheorem{corollary}{Corollary}
\definecolor{DarkGreen}{rgb}{0.1,0.5,0.1}
\definecolor{DarkRed}{rgb}{0.5,0.1,0.1}
\definecolor{DarkBlue}{rgb}{0.1,0.1,0.5}
\definecolor{DarkPurple}{rgb}{0.5,0.2,0.5}
\definecolor{DarkTurquoise}{rgb}{0.1,0.5,0.5}
\definecolor{beaublue}{rgb}{0.74, 0.83, 0.9}
\definecolor{coolblack}{rgb}{0.0, 0.18, 0.39}
\definecolor{apricot}{rgb}{0.98, 0.81, 0.69}
\definecolor{burntorange}{rgb}{0.8, 0.33, 0.0}
\definecolor{blue-violet}{rgb}{0.54, 0.17, 0.89}
\definecolor{byzantium}{rgb}{0.44, 0.16, 0.39}
\definecolor{brilliantrose}{rgb}{1.0, 0.33, 0.64}
\definecolor{cerisepink}{rgb}{0.93, 0.23, 0.51}
\definecolor{cobalt}{rgb}{0.0, 0.28, 0.67}
\definecolor{bostonuniversityred}{rgb}{0.8, 0.0, 0.0}
\newcommand{\off}[1]{}
\title{Compressed IF-TEM: Time Encoding Analog-to-Digital Compression}
\name{Saar Tarnopolsky$^{*\ddag}$, Hila~Naaman$^{\dag\ddag}$, Yonina C. Eldar$^{\dag}$, and Alejandro~Cohen$^{*}$
\thanks{\scriptsize $^{\ddag}$Equal contribution.}
\thanks{\scriptsize{This research was partially supported by a research grant from the Estate of Tully and Michele Plesser, by the European Union’s Horizon 2020 research and innovation program EU-H2020-ERC-CoG under grant No. 101000967, by the Israel Science Foundation under grant No. 536/22.}}}
\address{$^{*}$Electrical and Computer Engineering, Technion, Israel, \{saar.tar, alecohen\}@technion.ac.il\\
$^{\dag}$Math and CS, Weizmann Institute of Science, Israel, \{hila.naaman, yonina.eldar\}@weizmann.ac.il}
\begin{document}
\ninept
\maketitle
\begin{abstract}
An integrate-and-fire time-encoding-machine (IF-TEM) is an energy-efficient asynchronous sampler.
Utilizing the IF-TEM sampler for bandlimited signals, we introduce designs for time encoding and decoding with analog compression prior to the quantization phase.
Before the quantizer, efficient analog compression is conducted based on the stationarity of the encoded signal, which is a fundamental characteristic of IF-TEM processing.
Low-bit-rate reconstruction is achieved by subdividing the known IF-TEM dynamic range into tighter windows, which can be either fixed size or dynamically changed, and detecting in which window the sample resides. We demonstrate empirically that employing the same number of samples and up to 7$\%$ additional bits than the conventional IF-TEM results in a 5-20dB improvement in MSE.
Fixing the reconstruction MSE target and the number of samples, using the compressed IF-TEM enables the use of 1-2 fewer bits compared to the classical IF-TEM.
\end{abstract}
\begin{keywords}
Analog compression, quantization, integrate-and-fire, time-encoding-machine

\end{keywords}
\section{Introduction}
Analog to Digital Converters (ADCs) are among the most utilized components in digital systems today. The majority of today's samplers require highly accurate clock support \cite{miskowicz2006dynamic}. It is considered a difficult task to design highly precise clock samplers \cite{koscielnik2007designing}. As a result, asynchronous sampling systems began to gain favor as a suitable choice for successfully replacing synchronous clock samplers \cite{miskowicz2018event,1199179}.
Asynchronous samplers pave the way toward robust and energy-efficient analog sampling circuits. These systems are typically based on changes in the input signal rather than sampling at regular intervals \cite{miskowicz2018event}. In contrast to synchronous sampling circuits, asynchronous systems can encode time intervals only rather than time-amplitude pairs \cite{feichtinger2012approximate,adam2020sampling}.

Integrate-and-fire time-encoding-machine (IF-TEM) is one of the most prominent time-based asynchronous energy-efficient samplers \cite{lazar2004perfect,lazar2004time,adam2020sampling,alexandru2019reconstructing,naaman2022fri}. The signal is integrated, and when it crosses a given threshold, the time is encoded. IF-TEM sampler is known to be low-power and operates very similarly to neurons in the human body \cite{lazar2004perfect,lazar2004time}.
In recent years, the IF-TEM sampler has been used for sampling and reconstruction of a variety of signal classes, including band-limited (BL) \cite{lazar2004time,lazar2011video,adam2020sampling,thao2020time,adam2021asynchrony,naaman2021time}, finite-rate-of-innovation (FRI) \cite{naaman2022fri,naaman2022uniqueness,rudresh2020time}, and shift-invariant signals \cite{gontier2014sampling}. Quantification's effect on the quality of reconstruction utilizing an IF-TEM is also addressed \cite{lazar2004perfect,naaman2021time}.
However, prior studies did not leverage the IF-TEM's natural gain as a low-pass filter (LPF). In particular, the IF-TEM outputs are stationary, which permits efficient analog compression throughout the sampling period before quantization.

In this paper, we present a compressed IF-TEM (CIF-TEM) sampler, a design methodology for analog compression, in which analog bandlimited (BL) signals are sampled and quantized at the Nyquist rate using the IF-TEM ADC.
Based on our prior work \cite{naaman2021time}, we analyze BL signals that have finite energy, and their amplitudes that are constrained by the frequency and energy \cite{papoulis1967limits}.
Our method relies on subdividing the known IF-TEM dynamic range into tighter sub-ranges with are then quantized. Leveraging the stationarity of the IF-TEM output, the samples would be analogy compressed prior to quantization. Compared to the IF-TEM without compression using the same number of samples, the reconstruction MSE is improved while the total number of bits is reduced.
To the best of our knowledge, this is the first work to establish analog compression in IF-TEM systems.

Our contribution is threefold: first, we provide an analog compression technique based on the stationarity of the samples. By subdividing the known IF-TEM dynamic range into fixed-size narrower windows, this method enables efficient analog and digital implementation as well as low bit-rate reconstruction.
Second, we incorporate an adaptive component in the compression process so that it can better adapt to the sampler system, resulting in compression savings of at least 1-2 bits compared to sampling and recovering BL signals using IF-TEM quantization without an adaptive component.
Thirdly, we demonstrate numerically that by
utilizing our proposed time-based analog compression technique, the MSE of the reconstructed BL signal is lower than that proposed by \cite{naaman2021time} while using the same amount of samples and fewer bits.

The rest of the paper is organized as follows. In Section \ref{sec:prblem_for}, we discuss some background results and formulate the problem. In Section \ref{sec:CIF-TEM}, we present the compressed IF-TEM (CIF-TEM) scheme. Simulation results are provided in Section \ref{sec:simulation}, followed by conclusions.
\section{PROBLEM FORMULATION AND PRELIMINARIES}
\label{sec:prblem_for}
We begin by presenting relevant known results on IF-TEM sampling, perfect reconstruction, and quantization, followed by the problem formulation.
\subsection{IF-TEM for BL signals}
\label{section:SIRec}
An IF-TEM sampler is characterized by a bias $b$, scaling $\kappa$, and threshold $\delta$, as depicted in Fig. \ref{fig:TEM_quant}.
The input $x(t)\in L^2(\mathbb{R})$ is assumed to fulfill the condition $|x(t)|\leq c<b<\infty$.  As shown in Fig. \ref{fig:IF-TEM_scheme}, before time-encoding the signal $x(t)$, a bias $b$ is added. The resultant signal $b+x(t)$ is integrated after being scaled by $1/\kappa$. Either the time instances $\{t_n\}_{n\in\mathbb{Z}}$ or their differences $T_n = t_{n+1} - t_{n}$, at which the integral crosses the threshold $\delta$ are recorded, and the integrator is reset to zero.\\
The following relationship between system parameters:
\begin{align}
   \frac{1}{\kappa}\int_{t_n}^{t_{n+1}} (b+x(\tau))\, d\tau = \delta.
\end{align}
Alternatively, the following amplitude measurements can be computed:
\begin{equation}
x_n \triangleq\int_{t_n}^{t_{n+1}}x(\tau)\, d\tau =
  -b(t_{n+1}-t_n)+\kappa\delta.
\label{eq:trigger0}
\end{equation}
Using \eqref{eq:trigger0} and \cite{lazar2004perfect,lazar2004time} it can be shown that $T_n$ is bounded by:
\begin{equation}
\Delta t_{min}\triangleq\frac{\kappa\delta}{b+c}\leq T_n\leq\frac{\kappa\delta}{b-c}\triangleq\Delta t_{max}.
\label{eq:bounds}
\end{equation}
The reconstruction of BL signals from IF-TEM outputs has been studied for scenarios when the input signal is $2\Omega$ BL and $c$-bounded with finite energy $E$, where $\Omega$ is the frequency upper bound \cite{naaman2021time,adam2020sampling,lazar2004perfect,lazar2004time}. As in \cite{naaman2021time}, we assume BL signals with maximal energy $E$; in this case, the relation between $\Omega$ and the maximal signal's amplitude $c$ is $ c = \sqrt{{E\Omega}/{\pi}}$, as shown in \cite{papoulis1967limits}.

\begin{figure}
	\centering
	\includegraphics[trim=1cm 0cm 0cm 0cm, clip, width = 1.05\columnwidth]{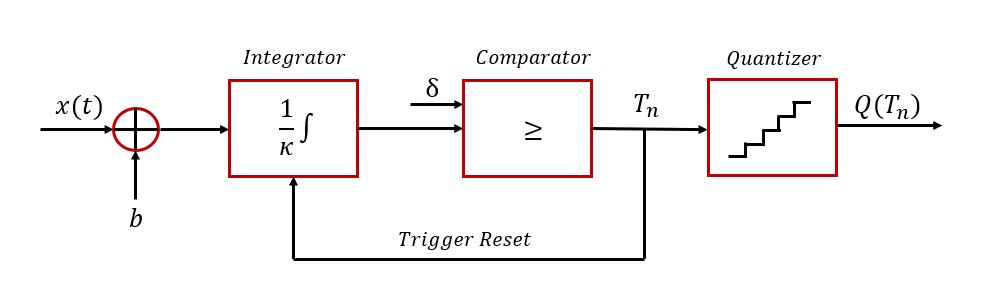}
	\vspace{-0.5cm}
	\caption{An IF-TEM time instances differences $T_n$ sampling and quantization scheme.}
		\label{fig:TEM_quant}
\end{figure}

We consider the IF-TEM sampling and reconstruction technique as in \cite{lazar2004time}, which showed that BL can be perfectly recovered using an IF-TEM with parameters $\{b,\kappa,\delta\}$ if $b>c$ and \cite{lazar2004time}
\begin{equation}
    \Delta t_{\max} < \frac{\pi}{\Omega}.
    \label{eq:density}
\end{equation}
The bound in \eqref{eq:density} necessitates a bandwidth that is inversely proportional to the time difference between the time instances. Therefore, the BL input signal can be perfectly recovered if the overall firing rate of the IF-TEM is greater than the Nyquist rate.
In this case, the signal is reconstructed similarly to a BL signal recorded with irregularly spaced amplitude samples (see \cite{lazar2004time,adam2020sampling} for information on the IF-TEM recovery techniques).

Using an IF-TEM as depicted in Fig. \ref{fig:TEM_quant}, the authors in \cite{lazar2004perfect} studied the MSE theoretical bounds of the reconstruction of signals sampled using IF-TEM and uniform quantizer. The authors in \cite{naaman2021time} studied the problem of quantizing the time instances differences samples $T_n$ of BL signals. They considered a $K$-level uniform design for the quantization levels and demonstrated that the quantization using IF-TEM results in better reconstruction for BL signals compared to conventional uniform sampling.
They proved that if the bias $b$ is chosen such that $b = \alpha c$, where $\alpha>1$, the step-size is given by
\begin{equation}
    \Delta_{\text{IF-TEM}} = \frac{\kappa\delta}{(\alpha+1)(\alpha-1)} \frac{2}{cK}.
    \label{eq:iftem_stepsize2}
\end{equation}

In particular, by experimental study, they showed that the quantizer step size as well as dynamic range  $\left[\frac{\kappa \delta}{b + c},\frac{\kappa \delta}{b-c}\right]$ of the samples $T_n$ decrease, resulting in a lower MSE compared with the amplitude Nyquist sampling and reconstruction method. In addition, they demonstrated that step size decreases as signals bandwidth increases.

\subsection{Problem Statement}
\label{sec:problem_formulation}
We consider the problem of recovering a $2\Omega$ BL and $c$-bounded signal  $x(t)\in L^2(\mathbb{R})$ with a fixed maximal energy $E<\infty$, from its quantized samples $T_n = t_{n+1}-t_n$ for the IF-TEM setup.
According to \cite{papoulis1967limits}, such signals are amplitude bounded and meet the following condition: \begin{equation}
    |x(t)| \leq c = \sqrt{E\Omega/\pi}.
\end{equation}

A generalized CIF-TEM sampling with quantization scheme is shown in Figures \ref{fig:CIF-TEM} and \ref{fig:CIF-TEM2}. The IF-TEM sampler with parameters $\{b,\kappa,\delta\}$ computes the stationary time-based measurements $T_n = t_{n+1}-t_n$ of the signal $x(t)$, which are then quantized using the compression and estimator blocks, resulting in $Q(r_n) = \hat{r}_n$, where $Q$ is the quantizer, $r_n$ and $\hat{r}_n$ denotes $T_n$ compressed and then quantized, respectively.
Due to quantization, it is impossible to fully recover the signal $x(t)$. The authors of \cite{naaman2021time} demonstrated that the IF-TEM sampler can result in a superior MSE of the recovered BL signal from quantized measurements as compared to the standard Nyquist reconstruction.
\begin{figure}
	\centering
	\includegraphics[width=0.5\textwidth]{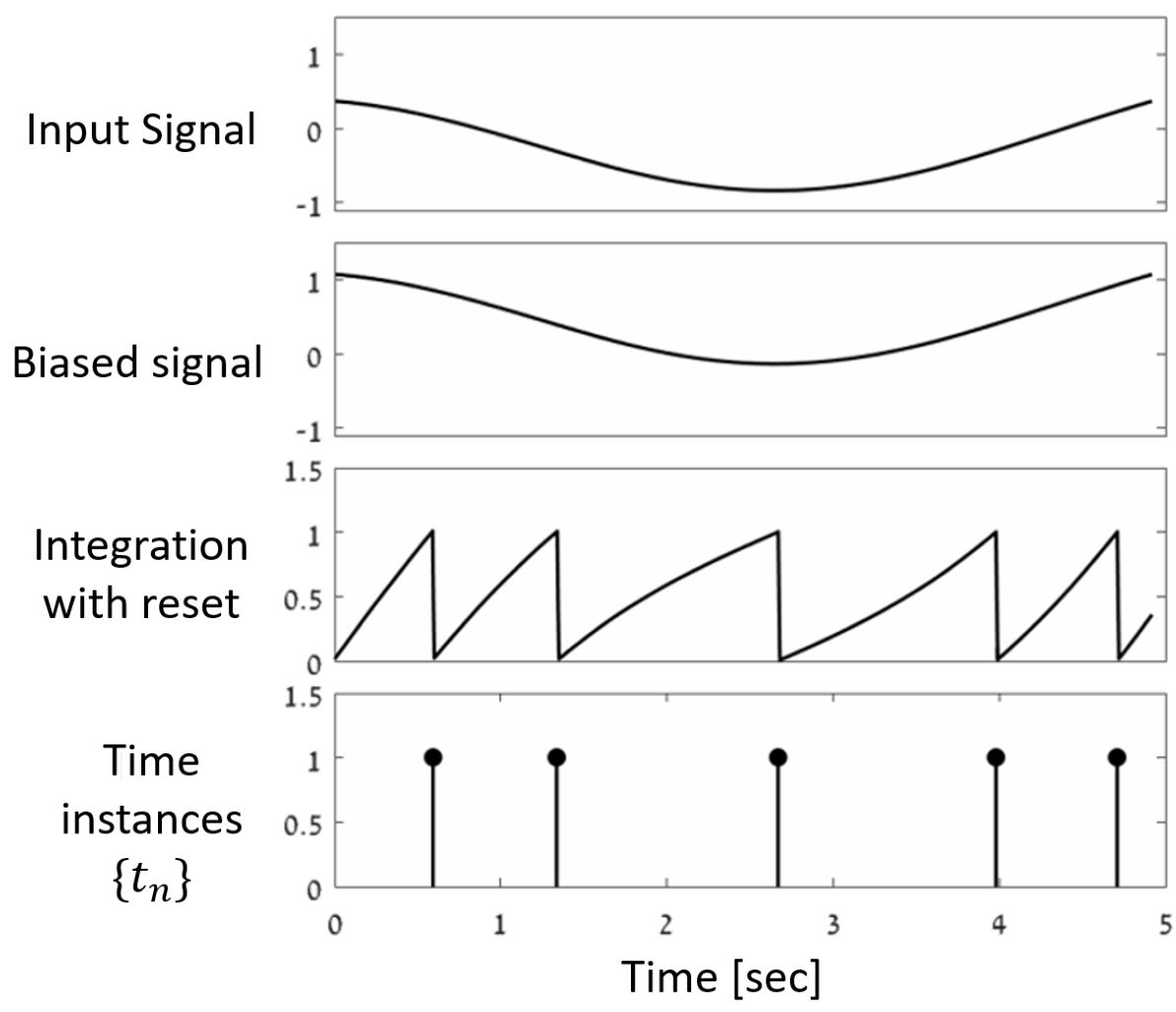}
	\vspace{-0.5cm}
	\caption{The IF-TEM system. The signal is biased and then integrated, each time the threshold is reached the time instances or their differences are recorded. }
	\label{fig:IF-TEM_scheme}
\end{figure}

We study the problem of recovering a signal from its quantized samples for both the conventional IF-TEM with quantization system proposed by \cite{naaman2021time} and our proposed CIF-TEM method, which employs compression at the quantization phase (see Fig. \ref{fig:CIF-TEM}).
We would like to establish the advantages of CIF-TEM over IF-TEM with quantization proposed by \cite{naaman2021time} in terms of MSE and bits.
\section{CIF-TEM: compressed IF-TEM}
\label{sec:CIF-TEM}
In this section, we analyze the problem of recovering a signal from its quantized samples for both the conventional IF-TEM with quantization system proposed by \cite{naaman2021time} and our proposed CIF-TEM method, which uses analog compression before the quantization phase.
Prior to the quantization operation for the IF-TEM, an estimator and compression blocks are added.  The advantages of CIF-TEM stem from the fact that the IF-TEM sampler acts as a LPF and, as such, its output is stationary in time, the measured sample values are close to one another, and analog compression can be efficiently applied. We propose two methods for dividing the dynamic range of quantized samples $\hat{T}_n$ into time windows. We determine the window number, which may be constant or fluctuate dynamically. We refer to these approaches, as constant CIF-TEM (CCIF-TEM) and dynamic CIF-TEM (DCIF-TEM), respectively.

As depicted in Fig. \ref{fig:CIF-TEM}, the IF-TEM output $T_n$ is the estimator block input. The estimator block calculates $L$, the number of windows $\{w_i\}_{i=0}^{L-1}$ which divides the dynamic range of the encoded signal
\[
\Delta_t \triangleq \Delta t_{max}-\Delta t_{min},
\]
for higher quantization resolution.
The compression block is then divides the dynamic range into $L$ windows, and determines to which of the windows $\{w_i\}_{i=1}^L$ the sample $T_n$ belongs. Then $K$ bits are assigned to quantize the sample $T_n$ in that specific sub-dynamic range regime $r_n$. The output of the compression block $r_n$, represents the sub dynamic range regime of the window $\{w_i\}_{i=0}^{L-1}$. Finally, the quantizer quantize  $r_n$. By decreasing the quantization step size and employing sub-dynamic range regimes, the reconstruction is improved in terms of MSE utilizing the CIF-TEM compared to \cite{naaman2021time,lazar2004perfect}.
Note that the window number $w_i$ is encoded separately from the samples. Since it is assumed that the encoded samples are stationary in time, the bulk of consecutive samples will fall within the same time windows. Consequently, the value of $w_i$ will fluctuate every few samples. Thus, $w_i$ is only encoded if its value has changed since the last sample.
Despite the fact that CIF-TEM techniques require more bits for window encoding, as demonstrated numerically in the next section, they can achieve the same MSE as IF-TEM with quantization while using less bits overall.  An example of a practical implementation for the analog compression scheme in CIF-TEM is illustrated in Fig. \ref{fig:CIF-TEM2}.

\subsection{CCIF-TEM}
\label{ssec:IF-TEMW with constant windows}
In the CCIF-TEM scheme, the number of windows $L\in \mathbb{N}$ is a positive constant number.
The sample's $T_n$ statistics are assumed to be known constants. Namely, the sample's variance $\sigma$ and mean $\mu$.
Given the sample's variance, the constant number of windows is determined as
\begin{equation}
    L = \left\lceil\frac{\Delta t_{max}-\Delta t_{min}}{2\sqrt{\sigma}}\right\rceil.
    \label{eq:L}
\end{equation}

\begin{lemma}
Consider a 2$\Omega$ $c$-bounded BL signal and an CCIF-TEM sampler with parameters $\{\kappa,\delta,b\}$, such as $b=\alpha c$, where $\alpha>1$. Let $\{\sigma,\mu\}$ be the known variance and mean samples statistics. Using a $K$-level uniform quantizer, the CCIF-TEM quantization step size is
\begin{equation}
     \Delta_{CCIF-TEM}=\frac{\kappa \delta}{(\alpha-1)(\alpha+1)}\frac{2}{cLK}.
    \label{eq:D_CCIF}
\end{equation}
\end{lemma}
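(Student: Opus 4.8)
The plan is to reduce the claim to the uncompressed step-size formula \eqref{eq:iftem_stepsize2}: I will show that the CCIF-TEM scheme simply partitions the sample dynamic range $\Delta_t$ into $L$ equal windows and then applies the same $K$-level uniform quantizer inside whichever window the sample falls into, so that the effective quantization interval shrinks from $\Delta_t$ to $\Delta_t/L$ and the step size becomes $\Delta_t/(LK)$. The remaining work is then purely to evaluate $\Delta_t$ in terms of $\{\kappa,\delta,c,\alpha\}$.

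First I would make the dynamic range explicit. By \eqref{eq:bounds}, the samples $T_n$ satisfy $T_n\in[\Delta t_{min},\Delta t_{max}]=\left[\frac{\kappa\delta}{b+c},\frac{\kappa\delta}{b-c}\right]$; substituting $b=\alpha c$ gives $\Delta t_{min}=\frac{\kappa\delta}{(\alpha+1)c}$ and $\Delta t_{max}=\frac{\kappa\delta}{(\alpha-1)c}$, so
\begin{equation}
\Delta_t=\Delta t_{max}-\Delta t_{min}=\frac{\kappa\delta}{c}\left(\frac{1}{\alpha-1}-\frac{1}{\alpha+1}\right)=\frac{2\kappa\delta}{c(\alpha-1)(\alpha+1)}.
\end{equation}
As a consistency check, dividing $\Delta_t$ by $K$ reproduces $\Delta_{\text{IF-TEM}}$ in \eqref{eq:iftem_stepsize2}, confirming that the baseline quantizer has step size $\Delta_t/K$.

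Next I would formalize the compression step. The estimator/compression block identifies the index $i$ of the length-$\Delta_t/L$ window $w_i$ containing $T_n$; conditioned on $i$, the quantity $r_n$ passed to $Q$ ranges over an interval of length only $\Delta_t/L$. Applying the $K$-level uniform quantizer on that sub-interval therefore yields
\begin{equation}
\Delta_{CCIF-TEM}=\frac{\Delta_t/L}{K}=\frac{2\kappa\delta}{c(\alpha-1)(\alpha+1)LK}=\frac{\kappa \delta}{(\alpha-1)(\alpha+1)}\frac{2}{cLK},
\end{equation}
which is exactly \eqref{eq:D_CCIF}. I would close by remarking that since $L$ in \eqref{eq:L} is a fixed positive integer determined only by $\sigma$ and the known endpoints $\Delta t_{min},\Delta t_{max}$, the window boundaries are common knowledge at encoder and decoder, so the window index $w_i$ is conveyed without quantization error and the only distortion is the within-window uniform quantization captured by the formula.

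The main obstacle is conceptual rather than computational: one must argue that the partition into windows introduces no loss beyond a relabeling, i.e. that given $w_i$ the compressed value $r_n$ is an exact representation of $T_n$ up to a known window offset, so that replacing the effective range $\Delta_t$ by $\Delta_t/L$ is legitimate and the resolution genuinely improves by the factor $L$. This in turn hinges on the fact that $\Delta t_{min}$ and $\Delta t_{max}$ from \eqref{eq:bounds} are the true worst-case endpoints used to size the quantizer, which follows from the standing assumption $|x(t)|\le c<b$.
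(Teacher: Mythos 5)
Your proof is correct and follows essentially the same route as the paper: compute the dynamic range $\Delta t_{max}-\Delta t_{min}$ from \eqref{eq:bounds}, divide it into $L$ uniform windows each quantized with $K$ levels, and substitute $b=\alpha c$ to obtain \eqref{eq:D_CCIF}. The extra remarks on window-index bookkeeping and the consistency check against \eqref{eq:iftem_stepsize2} are fine but not needed beyond what the paper's argument already contains.
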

\begin{proof}
Using \eqref{eq:bounds}, the size of the samples $T_n$ dynamic range is:
\begin{equation}
    \Delta t_{max}-\Delta t_{min} = \frac{\kappa\delta}{(b-c)(b+c)}2c.
\end{equation}
Since the dynamic range size is divided into $L$ uniform sub-ranges and then quantized using $K$-level quantizer, the CCIF-TEM quantization step size is
\begin{equation}
    \Delta_{CCIF-TEM}=\frac{\kappa\delta}{(b-c)(b+c)}\frac{2c}{LK}.
\end{equation}
Using the relation $b=\alpha c$, where $\alpha>1$ results
\begin{equation*}
    \Delta_{CCIF-TEM}=\frac{\kappa \delta}{(\alpha-1)(\alpha+1)}\frac{2}{cLK}.
\end{equation*}
\end{proof}
\begin{figure}
	\centering
	\includegraphics[trim=1.1cm 0cm 0cm 0cm, clip, width = 1.11\columnwidth]{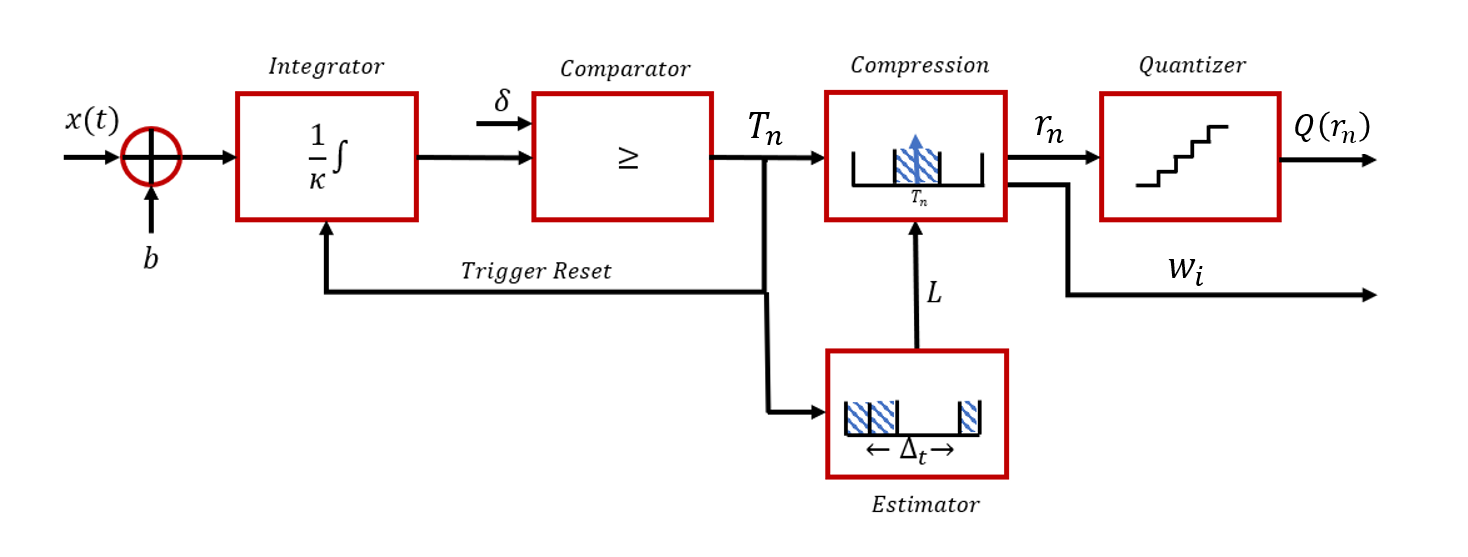}
	\vspace{-0.5cm}
	\caption{The CIF-TEM system. In the quantization phase, it consists of a compression block and an estimator block for determining the number of partitions to the dynamic range of the samples.}
		\label{fig:CIF-TEM}
\end{figure}
\begin{figure}
	\centering
	\includegraphics[width=0.5\textwidth]{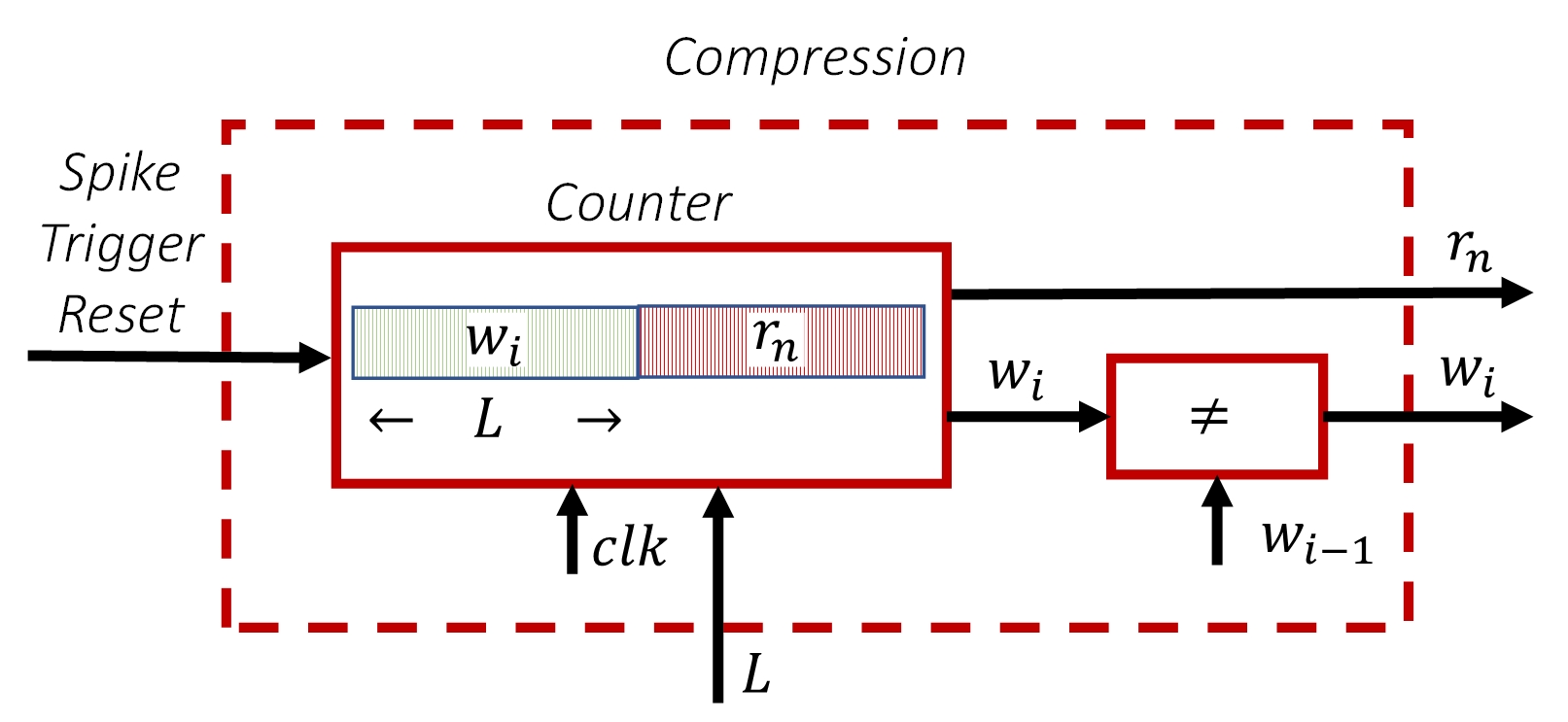}
	\vspace{-0.5cm}
	\caption{Analog compression scheme in CIF-TEM. The counter is counting the time instances differences $T_n$. The higher $L$ bits of the counter are assigned to represent window number $w_i$, and the entire lower bits of the counter are assigned to represent sub-dynamic range regime $r_n$. When the integral of the IF-TEM scheme crosses the threshold $\delta$ (see Fig.~\ref{fig:CIF-TEM}), the value of the counter is recorded and then reset. That is, $r_n$ is quantized by the quantizer in the CIF-TEM system, and $w_i$ is compared to the previously recorded value $w_{i-1}$. If the $w_i \neq w_{i-1}$, the value of the new window number is recorded.}
		\label{fig:CIF-TEM2}
\end{figure}


Note that $L$ and $\sigma$ are inversely proportional to each other. Using \eqref{eq:L} and \eqref{eq:D_CCIF}, it follows that decreasing $\sigma$ will increase $L$, which will decrease the quantization step size $\Delta_{CCIF-TEM}$.
Next, we demonstrate that when $L>1$, the quantization step size of the CCIF-TEM is smaller than that of the IF-TEM, hence potentially decreasing the MSE. This outcome is summarized in the following theorem.

\begin{theorem}
Let $\Delta_{IF-TEM}$ be the quantization step size of an IF-TEM sampler with a $K$-level uniform quantizer proposed by \cite{naaman2021time}. Let $\Delta_{CCIF-TEM}$ be the quantization step size of a CCIF-TEM sampler with a $K$-level uniform quantizer. Both samplers have the same parameters $\{\kappa,\delta,b\}$ and the same uniform quantizer with $K$ quantization levels.
Let the sample's variance $\sigma$ be a constant positive number.
In this case, $\Delta_{CCIF-TEM} < \Delta_{IF-TEM}$.
\end{theorem}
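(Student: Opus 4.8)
\emph{Proof proposal.} The plan is to reduce the inequality to the single fact that the window count satisfies $L>1$, and then to read off the conclusion directly from the two step-size formulas already established.

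First I would take the ratio of \eqref{eq:D_CCIF} and \eqref{eq:iftem_stepsize2}. Since both samplers use the same $\{\kappa,\delta,b\}$, hence the same $\alpha$ with $b=\alpha c$, the same $c=\sqrt{E\Omega/\pi}$, and the same $K$-level uniform quantizer, every factor in the two expressions cancels except the window count, giving the identity
\[
\Delta_{CCIF-TEM}=\frac{1}{L}\,\Delta_{IF-TEM}.
\]
Thus the theorem is \emph{equivalent} to showing $L\ge 2$ (note $L\ge 1$ is automatic from the ceiling in \eqref{eq:L}, but $L=1$ would only give equality, not the strict inequality).

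Second I would bound the sample variance by the range of the samples. By \eqref{eq:bounds} the values $T_n$ all lie in the interval $[\Delta t_{min},\Delta t_{max}]$, whose length is $\Delta_t=\Delta t_{max}-\Delta t_{min}$. For any random variable confined to an interval of length $\Delta_t$, Popoviciu's inequality on variances gives $\sigma\le \Delta_t^2/4$, with equality only when the variable is supported on the two endpoints with equal mass. For a genuine $2\Omega$ $c$-bounded bandlimited input the firing intervals $T_n$ vary continuously and are not concentrated at the two extreme values $\Delta t_{min},\Delta t_{max}$; indeed the whole point of the scheme is that the IF-TEM acts as a low-pass filter, so its output is stationary and the $T_n$ cluster tightly about their mean $\mu$. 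Hence the inequality is strict, $\sigma<\Delta_t^2/4$, i.e. $2\sqrt{\sigma}<\Delta_t$, so $\Delta_t/(2\sqrt{\sigma})>1$ and therefore $L=\lceil \Delta_t/(2\sqrt{\sigma})\rceil\ge 2$.

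Combining the two steps, $\Delta_{CCIF-TEM}=\Delta_{IF-TEM}/L\le \Delta_{IF-TEM}/2<\Delta_{IF-TEM}$, which is the claim. The one delicate point — the main obstacle — is the \emph{strict} inequality in the second step: one must rule out the degenerate case in which $T_n$ sits only at the endpoints of the dynamic range so that $\sigma=\Delta_t^2/4$. This can be handled either by appealing to the stationarity and clustering of the IF-TEM outputs (the regime the CCIF-TEM is designed for, where $\sqrt{\sigma}\ll\Delta_t$ and $L$ is comfortably larger than $1$), or, if one wants the bare statement for all admissible signals, by noting that the two-point extremal distribution is non-generic and is excluded by the continuity of $t\mapsto\frac1\kappa\int_{t}^{\cdot}(b+x(\tau))\,d\tau$.
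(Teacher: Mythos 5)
Your proposal is correct and follows essentially the same route as the paper's proof: the identity $L\,\Delta_{CCIF-TEM}=\Delta_{IF-TEM}$ obtained by comparing \eqref{eq:D_CCIF} with \eqref{eq:iftem_stepsize2}, combined with Popoviciu's inequality to force $L\geq 2$ via \eqref{eq:L}. The only difference is that you explicitly address the strictness of the variance bound (ruling out the two-endpoint extremal distribution), a point the paper glosses over by stating Popoviciu with a strict inequality outright, so your version is if anything slightly more careful.
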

\begin{proof}
Using Popoviciu inequality \cite{popoviciu1935equations}, it follows that \begin{equation}
    {\sigma} < \frac{\left(\Delta t_{max}- \Delta t_{min}\right)^2}{4}.
\end{equation}
Thus,
$2\sqrt{\sigma} < \Delta t_{max}- \Delta t_{min}$. In this case, using \eqref{eq:L}, it implies that $L>1$.
Using Lemma 1 with \eqref{eq:iftem_stepsize2} and \eqref{eq:D_CCIF}, t can be shown that $L \times \Delta_{CCIF-TEM} = \Delta_{IF-TEM}$.
Since $L>1$ it follows that $\Delta_{CCIF-TEM} < \Delta_{IF-TEM}$.
\end{proof}
Note that quantization error can be reduced based on dense quantization, and the CCIF-TEM framework results in a lower quantization error than the IF-TEM scheme.
Since the CCIF-TEM compression utilizes a constant $L$ using the signal statistics, we want to be able to estimate the statistics in real time and adjust $L$ accordingly, as we suggest in the following sub-section.

\subsection{DCIF-TEM}
\label{ssec:IF-TEMW with dynamic windows}
In this section, we present the DCIF-TEM approach, which permits the window size $L$ to alter dynamically during sample acquisition by estimating the signal statistics in real-time.
In contrast to the CCIF-TEM, where the constant number of windows is determined using previous knowledge about the input signal, we do not have access to this information in the DCIF-TEM.
The system is initiated with a default value for $L$, and is converging time.

In particular, the estimator block computes $L$ for each fixed $m>1$ IF-TEM samples based on the variance of the previous $m$ samples. The variance is denoted by $\hat{\sigma}$ and is given as:
\begin{equation}
    \hat{\sigma}=\frac{1}{m}\sum_{i=n-m+1}^{n} (T_{i}-\hat{\mu_m})^{2},
\end{equation}
where $\hat{\mu_m}$ is the the average of the last $m$ samples and is calculated as:
\begin{equation}
    \hat{\mu}=\frac{1}{m}\sum_{i=n-m+1}^{n} T_{i}.
\end{equation}
In the DCIF-TEM scheme, the number of windows $L$ is given as
\begin{equation}
    L = \left\lceil\frac{\Delta t_{max}-\Delta t_{min}}{2\sqrt{\hat{\sigma}}}\right\rceil.
\end{equation}
Note that
increasing $\hat{\sigma}$ decreases $L$, which increases the quantization step size $\Delta_{DCIF-TEM}$.
Similar to the CCIF-TEM sampler, we show that when $L>1$, the quantization step size of the DCIF-TEM is smaller than that of the IF-TEM, potentially reducing the MSE.
\begin{theorem}
Let $\Delta_{IF-TEM}$ be the quantization step size of an IF-TEM sampler and $\Delta_{DCIF-TEM}$ be the quantization step size of a DCIF-TEM sampler. Let also both samplers have the parameters $\{\kappa,\delta,b\}$ and the same uniform quantizer $Q$ with $K$  quantization levels.
Then $\Delta_{DCIF-TEM} < \Delta_{IF-TEM}$.
\end{theorem}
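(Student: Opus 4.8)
The plan is to mirror the argument used for Theorem~1, replacing the a priori variance $\sigma$ by the empirical variance $\hat{\sigma}$ computed over the last $m$ samples. First I would record the DCIF-TEM analogue of Lemma~1: since the estimator partitions the dynamic range $\Delta t_{max}-\Delta t_{min} = \frac{\kappa\delta}{(b-c)(b+c)}2c$ into $L$ equal windows and then applies the same $K$-level uniform quantizer inside the selected window, the step size is
\begin{equation}
  \Delta_{DCIF-TEM} = \frac{\kappa\delta}{(b-c)(b+c)}\frac{2c}{LK} = \frac{\Delta_{IF-TEM}}{L},
\end{equation}
exactly as in the constant case but now with $L = \left\lceil (\Delta t_{max}-\Delta t_{min})/(2\sqrt{\hat{\sigma}})\right\rceil$ (using $b=\alpha c$, $\alpha>1$, to match the form of $\Delta_{IF-TEM}$ in \eqref{eq:iftem_stepsize2}). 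Hence the whole theorem reduces to establishing $L>1$, i.e. $2\sqrt{\hat{\sigma}} < \Delta t_{max}-\Delta t_{min}$.

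For that bound I would invoke Popoviciu's inequality on the empirical distribution of $T_{n-m+1},\dots,T_n$. The bounds \eqref{eq:bounds} force every sample $T_i$ into $[\Delta t_{min},\Delta t_{max}]$, so the variance of any distribution supported on that interval --- in particular the empirical measure placing mass $1/m$ at each $T_i$, whose variance is precisely $\hat{\sigma}$ --- is at most $(\Delta t_{max}-\Delta t_{min})^2/4$. This already yields $2\sqrt{\hat{\sigma}}\le \Delta t_{max}-\Delta t_{min}$, hence $L\ge 1$; to upgrade to the strict inequality $L>1$ I would argue that equality in Popoviciu forces the empirical measure to be the balanced two-point distribution on $\{\Delta t_{min},\Delta t_{max}\}$, a degenerate configuration that a $2\Omega$-bandlimited, $c$-bounded signal sampled through the IF-TEM does not produce (the $T_i$ vary continuously with $x(t)$ and are not pinned to the two extreme firing times). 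Then $2\sqrt{\hat{\sigma}}<\Delta t_{max}-\Delta t_{min}$, so the ratio exceeds $1$ and $L=\lceil\cdot\rceil\ge 2$.

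Combining the two steps, $L\,\Delta_{DCIF-TEM}=\Delta_{IF-TEM}$ together with $L>1$ immediately gives $\Delta_{DCIF-TEM}<\Delta_{IF-TEM}$. The main obstacle I anticipate is precisely the strictness in the second step: Popoviciu only supplies a non-strict bound, and for small $m$ (especially odd $m$, where an exactly balanced two-point empirical measure is in any case impossible) one must be a little careful. I would dispatch this either via the genericity remark above, or --- more cleanly --- by observing that in the default/converged operating regime the estimator keeps $\hat{\sigma}$ safely below the Popoviciu ceiling, so $L\ge 2$ holds with margin; the step-size identity and the final comparison are then a verbatim repetition of the CCIF-TEM computation and require no further work.
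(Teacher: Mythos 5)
Your proposal follows essentially the same route as the paper's proof: Popoviciu's inequality applied to the empirical variance $\hat{\sigma}$ of samples confined to $[\Delta t_{min},\Delta t_{max}]$ yields $L>1$, and the identity $L\,\Delta_{DCIF-TEM}=\Delta_{IF-TEM}$ then gives the strict step-size comparison. Your extra care about strictness is in fact warranted: the paper simply asserts the strict form of Popoviciu's bound, whereas you correctly observe it is a priori non-strict and rule out the degenerate balanced two-point configuration at $\{\Delta t_{min},\Delta t_{max}\}$, tightening a step the paper leaves implicit.
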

\begin{proof}
Using \eqref{eq:bounds} and Popoviciu inequality in \cite{popoviciu1935equations}, implies that the variance of any set of consecutive samples $\{T_i\}_{i=n-m+1}^{n}$ such that $n>m$ upholds
\[
2\sqrt{\hat{\sigma}} < \Delta t_{max} - \Delta t_{min}.
\]
Since $L = \left\lceil\frac{\Delta t_{max}-\Delta t_{min}}{2\sqrt{\hat{\sigma}}}\right\rceil$ we conclude that $L>1$. The dynamic range of the IF-TEM sampler is defined as $\frac{2c\kappa\delta}{(b-c)(b+c)}$ and the quantization step size is $\frac{\kappa\delta}{(b-c)(b+c)}\frac{2c}{K}$. DCIF-TEM works with sub ranges each of size $\frac{\kappa\delta}{(b-c)(b+c)}\frac{2c}{L}$ such that the quantization step size is
\[
\frac{\kappa\delta}{(b-c)(b+c)}\frac{2c}{LK}.
\]
Following that we conclude that
\[
L \times \Delta_{CCIF-TEM} = \Delta_{IF-TEM}.
\]
Since $L>1$ the condition $\Delta_{DCIF-TEM} < \Delta_{IF-TEM}$ upholds.
\end{proof}
Note that since the samples are stationary, we expect that the size of the window will not vary much between samples. Thus, $\hat{\sigma}$ is computed and updated every $l$ samples, where $l>1$.
Thus, the number of windows $L$ only changes when the variance, $\sigma$, varies significantly.
Signal statistics are computed dynamically in DCIF-TEM, whereas they are assumed in CCIF-TEM. Knowing rather than estimating the exact sample statistics results in a reduced MSE. Consequently, the CCIF-TEM has a lower MSE than the DCIF-TEM, which contains less prior information.

\begin{corollary}
Based on Lemma 1 and Theorems 1 and 2, in order to reach $\Delta_{CCIF-TEM} = \Delta_{IF-TEM}$ or $\Delta_{DCIF-TEM} = \Delta_{IF-TEM}$, one can use uniform quantizer $Q'$ with $K'$ quantization levels in the CCIF-TEM and DCIF-TEM samplers such that $K'L = K$.
\end{corollary}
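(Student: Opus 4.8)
The plan is to reduce the claim to a one-line algebraic identity obtained by substituting the explicit step-size formulas already established. First I would recall from \eqref{eq:iftem_stepsize2} that the IF-TEM step size with a $K$-level quantizer is
\[
\Delta_{\text{IF-TEM}} = \frac{\kappa\delta}{(\alpha-1)(\alpha+1)}\frac{2}{cK},
\]
and from Lemma~1 that the CCIF-TEM step size, now using a quantizer $Q'$ with $K'$ levels and $L$ windows, is
\[
\Delta_{CCIF-TEM} = \frac{\kappa\delta}{(\alpha-1)(\alpha+1)}\frac{2}{cLK'}.
\]
Setting $\Delta_{CCIF-TEM} = \Delta_{\text{IF-TEM}}$ and cancelling the common factor $\frac{\kappa\delta}{(\alpha-1)(\alpha+1)}\frac{2}{c}$ on both sides leaves $\frac{1}{LK'} = \frac{1}{K}$, i.e. $K'L = K$. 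This is precisely the asserted condition.

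For the DCIF-TEM case I would argue identically: Theorem~2 (and its proof) establishes that the DCIF-TEM step size with an $L$-window partition and a $K$-level quantizer equals $\frac{\kappa\delta}{(b-c)(b+c)}\frac{2c}{LK}$, which under $b = \alpha c$ is the same expression as for CCIF-TEM. Replacing $K$ by $K'$ and imposing equality with $\Delta_{\text{IF-TEM}}$ again yields $K'L = K$. Alternatively, and perhaps more cleanly, I would invoke the relations $L\times\Delta_{CCIF-TEM} = \Delta_{\text{IF-TEM}}$ and $L\times\Delta_{DCIF-TEM} = \Delta_{\text{IF-TEM}}$ from Theorems~1 and~2 together with the observation that the CCIF/DCIF step size is inversely proportional to the number of quantization levels, so scaling the level count from $K$ to $K'$ scales the step size by $K/K'$; the equality $\Delta(K') = \Delta_{\text{IF-TEM}}$ then forces $K/K' = L$.

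I do not expect a genuine obstacle here — the result is a direct corollary of the cited lemma and theorems. The only subtlety worth flagging is the implicit integrality requirement: for $Q'$ to be a bona fide uniform quantizer, $K'$ must be a positive integer, so the construction presupposes $L \mid K$ (or, in practice, one takes $K' = \lceil K/L\rceil$ and accepts a step size slightly below $\Delta_{\text{IF-TEM}}$). I would add a sentence making this explicit so that the statement "one can use $Q'$ with $K'$ levels such that $K'L = K$" is not read as an unconditional claim.
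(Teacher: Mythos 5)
Your proposal is correct and follows essentially the same route the paper intends: the corollary is a direct substitution into the step-size formulas of \eqref{eq:iftem_stepsize2} and Lemma~1 (equivalently, the relation $L\,\Delta_{CCIF-TEM}=\Delta_{IF-TEM}$ established in Theorems~1 and~2), so setting the $K'$-level step size equal to $\Delta_{IF-TEM}$ yields $K'L=K$. Your added remark about the integrality of $K'$ (requiring $L \mid K$, or taking $K'=\lceil K/L\rceil$) is a sensible caveat that the paper leaves implicit, but it does not change the argument.
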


Since the number of bits utilized to store the signal is directly proportional to $\log_2K$, the CIF-TEM systems compress signal bits more than IF-TEM systems. Note that CIF-TEM systems also encode window numbers, which introduces an additional piece of cost. It is assumed that we know the sample's statistics or can dynamically estimate them. Therefore we expect that consecutive samples will belong to the same time windows, resulting in a low number of window number changes. Our simulations demonstrate that using fewer quantization levels in the quantizers of CIF-TEM systems maintains the same MSE as an IF-TEM sampler while performing bit compression.
In future work, we will drive analytical results for this overhead.

\section{Evaluation results}
\label{sec:simulation}
In this section, we exemplify our main result in an experimental study using simulations. First, we demonstrate that the CIF-TEM schemes outperform the IF-TEM quantization approach proposed in \cite{naaman2021time} in terms of MSE, by employing the same amount of IF-TEM samples and overall bits.
Secondly, we show that CCIF-TEM produces a lower MSE than DCIF-TEM. It is shown, however, that by using the DCIF-TEM rather than the CCIF-TEM to encode the same amount of samples, less bits may be employed, resulting in the same MSE.

We verify our main result using a BL signal $x(t)$ as input. We consider 100 randomly selected $2\Omega$ bandlimited signals $x(t)$, which is bounded in time, i.e., $|x(t)|\leq c$, for $c=\sqrt{(E\Omega)/\pi}$ with maximal $E=0.8$ and $\Omega$ varying from $5-80$ Hz. The quantization bit range $\log_2K \in \{6,7,...,15\}$.
We investigate the recovery after quantization for the IF-TEM and the CIF-TEM methods. The IF-TEM parameters are selected as follows; we use fixed values of $\delta=0.075$ and $\kappa=0.5$. To have a sufficient number of samples needed for recovery, the bias is selected in two ways resulting in a maximal oversampling factor of 3.5: first, a fixed $b=35$ for all signals according to \cite[Section 2]{lazar2004perfect}, and second, choosing $b=\alpha c$ with $\alpha=6$ according to \cite[Section 3]{naaman2021time}.

In Fig. \ref{fig:res1}, the suggested CIF-TEM sampling frameworks with quantization (CCIF-TEM and DCIF-TEM) are evaluated in terms of MSE and compared to the IF-TEM approach suggested by \cite{naaman2021time}.
The MSE is calculated as: \begin{equation}
MSE=20\log_{10}||x(t)-\hat{x}(t)||_{L_2[0,T]}.
\end{equation}
As can be shown, better MSE is reached using less bits in CIF-TEM systems compared to the IF-TEM systems.
The CCIF-TEM surpasses the DCIF-TEM by employing fewer quantization bits for better reconstruction for the same number of samples.
This result may be related to the fact that in the CCIF-TEM, the constant number of windows is determined using prior information about the input signal, whereas in the DCIF-TEM, this information is unavailable.
The DCIF-TEM parameters are $l=5$ and $m=40$. Using the DCIF-TEM sampler, the number of windows $L$ is calculated as the average number of windows used to encode the signal, $2 \leq L \leq 8$, i.e., 2-3 bits are needed to encode the window.

\begin{figure}[t]
\begin{minipage}[b]{1.0\linewidth}
  \centering
  \centerline{\includegraphics[trim=0cm 0cm 0cm 0cm, clip, width = 1.05\columnwidth]{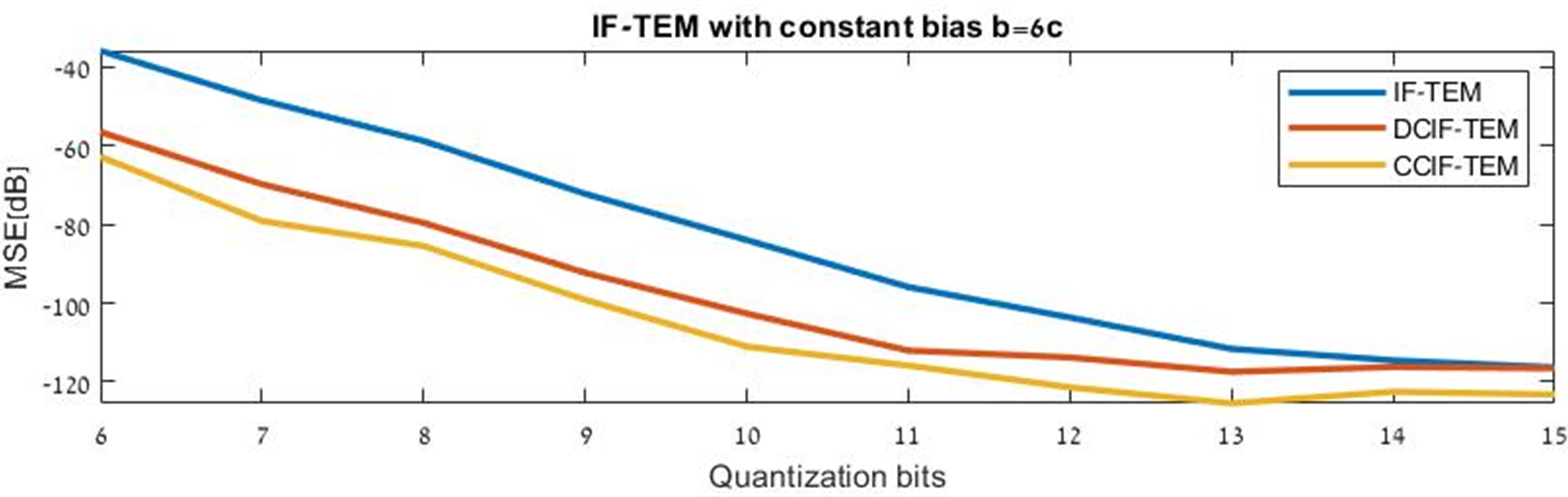}}
  \centerline{(a)}\medskip
\end{minipage}
\begin{minipage}[b]{1.0\linewidth}
  \centering
  \centerline{\includegraphics[trim=0cm 0cm 0cm 0cm, clip, width = 1.05\columnwidth]{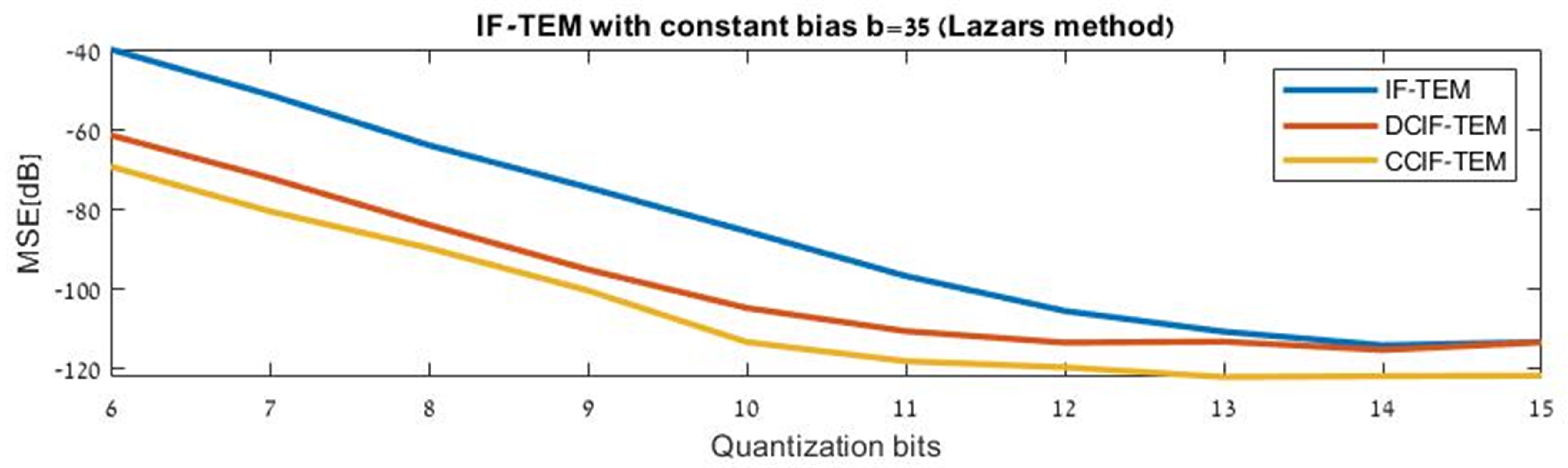}}
  \centerline{(b)}\medskip
\end{minipage}
\caption{Comparison in terms of MSE in dB between IF-TEM, CCIF-TEM, and DCIF-TEM: (a) Comparison between IF-TEM systems where $b=\alpha c$ and $\alpha=6$, (b) Comparison between IF-TEM systems where bias is constant, $b=35$.}
\label{fig:res1}
\end{figure}

In Fig. \ref{fig:res2} and Table 1, we show the overhead bits which are required for encoding the window numbers in CIF-TEM.
On average, only 5$\%$ additional bits are required to encode window number values. In addition, it can be observed that, on average, DCIF-TEM results in a lower overhead.
Let $p$ be defined as the overhead in the percentage of bits used in the CIF-TEM sampler, for a given $K$-levels quantizer. Since IF-TEM systems require two additional bits to achieve the same MSE as CIF-TEM systems, we can compute the compression obtained by using CIF-TEM as follows:
\[
\frac{\log_2(K-2)}{\log_2K}\times 100p.
\]
Table 1 presents the compression percentage in bits in using the CIF-TEM system while preserving the same MSE as in IF-TEM, i.e., Table 1 shows the compression in bits [$\%$] of encoding the entire signal using CIF-TEM systems with $K-2$ levels uniform quantizers compared to encoding the entire signal using IF-TEM sampler with $K$ level uniform quantizer. As seen in Table 1, for the same MSE in both types of systems, the compression in bits [$\%$] between the samplers ranges between 10$\%$ and 20$\%$. Consequently, the CIF-TEM sampler necessitates the use of extra bits to encode the window numbers of the samples, however, it reduces the overall number of bits that encode the signal.  Note that, increasing the number of quantization levels $K$, the compression we gain reduces. This happens because as we increase $K$, the MSE decreases and approaches its saturation value.

\begin{figure}[t]
\begin{minipage}[b]{1.0\linewidth}
  \centering
  \centerline{\includegraphics[trim=0cm 0cm 0cm 0cm, clip, width = 1.05\columnwidth]{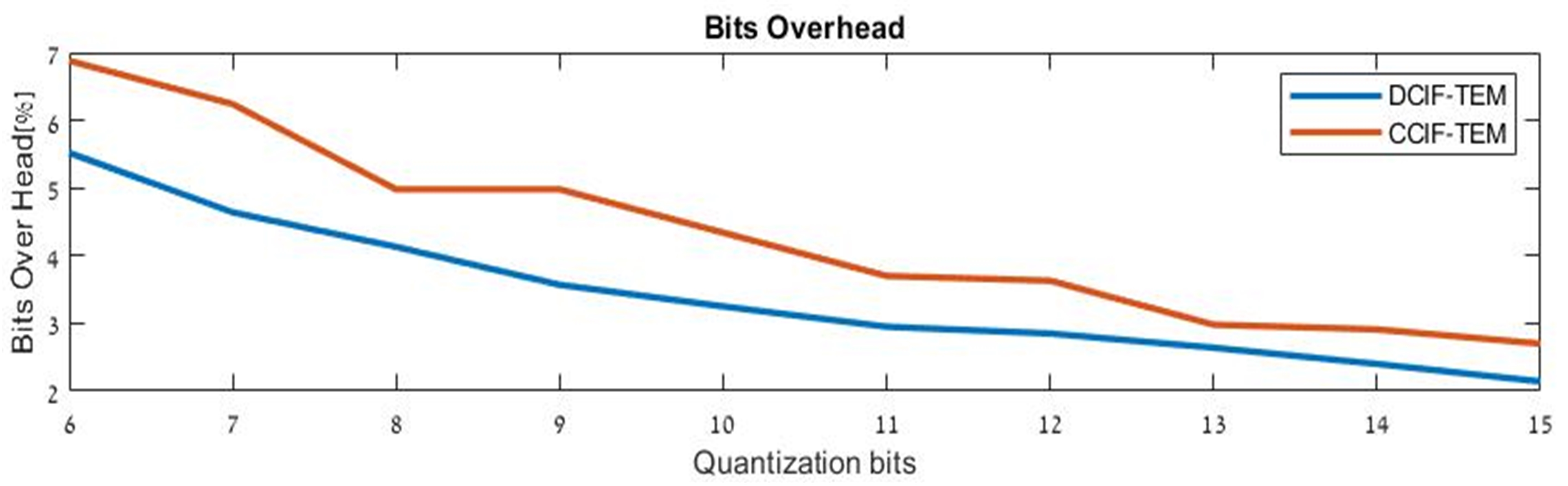}}
\end{minipage}
\caption{Average overhead (in bits) to encode the window number, $w_i$, in CIF-TEM systems. The overhead is presented in percentages.}
\label{fig:res2}
\end{figure}
\begin{table}[h!]
\begin{center}
\begin{tabular}{|c|c|c|c|}
\hline
\multicolumn{2}{|c|}{Quantizer bits} & \multicolumn{2}{|c|}{Overall system compression [$\%$]}\\ \hline
  CIF-TEM & IF-TEM & DCIF-TEM & CCIF-TEM\\
  \hline
  6 & 8 & 20.86\% & 19.84\%  \\
  \hline
  7 & 9 & 18.61\% & 17.37\% \\
  \hline
  8 & 10 & 16.7\% & 16.02\% \\
  \hline
  9 & 11 & 15.26\% & 14.11\% \\
  \hline
  10 & 12 & 13.96\% & 13.05\% \\
  \hline
  11 & 13 & 12.89\% & 12.25\% \\
  \hline
  12 & 14 & 11.84\% & 11.17\% \\
  \hline
  13 & 15 & 11.05\% & 10.75\% \\
  \hline
\end{tabular}
\caption{System compression (in bits $\%$). That is, the number of overall bits used to encode a signal using CIF- TEM scheme proposed, compared to the overall number of bits needed to encode the same signal using an IF-TEM scheme.}
\label{table:1}
\end{center}
\end{table}


\section{Conclusion}
\label{conclution}
Using the IF-TEM sampler for BL signals, we introduced methods for time encoding and decoding with analog compression.
Prior to the quantizer, analog compression is performed based on the stationarity of the encoded signal, a basic aspect of IF-TEM processing.
Low-bit-rate reconstruction is accomplished by subdividing the known IF-TEM dynamic range into tighter windows, which can be of fixed or variable size, and determining which window the sample resides within.
We empirically demonstrate that utilizing the same number of samples and up to 7$\%$ more bits than the traditional IF-TEM improves MSE by 5-20dB.
Using the compressed IF-TEM enables the usage of 1-2 less bits compared to the traditional IF-TEM, given the same reconstruction MSE target and amount of samples.
\bibliographystyle{IEEEtran}
\bibliography{refs}

\begin{thebibliography}{10}
\providecommand{\url}[1]{#1}
\csname url@samestyle\endcsname
\providecommand{\newblock}{\relax}
\providecommand{\bibinfo}[2]{#2}
\providecommand{\BIBentrySTDinterwordspacing}{\spaceskip=0pt\relax}
\providecommand{\BIBentryALTinterwordstretchfactor}{4}
\providecommand{\BIBentryALTinterwordspacing}{\spaceskip=\fontdimen2\font plus
\BIBentryALTinterwordstretchfactor\fontdimen3\font minus
  \fontdimen4\font\relax}
\providecommand{\BIBforeignlanguage}[2]{{%
\expandafter\ifx\csname l@#1\endcsname\relax
\typeout{** WARNING: IEEEtran.bst: No hyphenation pattern has been}%
\typeout{** loaded for the language `#1'. Using the pattern for}%
\typeout{** the default language instead.}%
\else
\language=\csname l@#1\endcsname
\fi
#2}}
\providecommand{\BIBdecl}{\relax}
\BIBdecl

\bibitem{miskowicz2006dynamic}
M.~Mi{\'s}kowicz and D.~Ko{\'s}cielnik, ``{The dynamic range of timing
  measurements of the asynchronous Sigma-Delta modulator},'' \emph{Proc. IFAC},
  vol.~39, no.~21, pp. 395--400, 2006.

\bibitem{koscielnik2007designing}
D.~Ko{\'s}cielnik and M.~Mi{\'s}kowicz, ``{Designing Time-to-Digital Converter
  for Asynchronous ADCs},'' \emph{Proc. IEEE Design Diag. Electron. Circuits
  Syst.}, pp. 1--6, 2007.

\bibitem{miskowicz2018event}
M.~Mi{\'s}kowicz, \emph{Event-based control and signal processing}.\hskip 1em
  plus 0.5em minus 0.4em\relax CRC press, 2018.

\bibitem{1199179}
E.~Allier, G.~Sicard, L.~Fesquet, and M.~Renaudin, ``{A new class of
  asynchronous A/D converters based on time quantization},'' in \emph{Ninth
  International Symposium on Asynchronous Circuits and Systems, 2003.
  Proceedings.}, 2003, pp. 196--205.

\bibitem{feichtinger2012approximate}
H.~G. Feichtinger, J.~C. Pr{\'\i}ncipe, J.~L. Romero, A.~S. Alvarado, and G.~A.
  Velasco, ``Approximate reconstruction of bandlimited functions for the
  integrate and fire sampler,'' \emph{Adv. Comput. Math.}, vol.~36, no.~1, pp.
  67--78, 2012.

\bibitem{adam2020sampling}
{K. Adam, A. Scholefield, and M. Vetterli}, ``Sampling and reconstruction of
  bandlimited signals with multi-channel time encoding,'' \emph{IEEE Tran.
  Signal Process.}, vol.~68, pp. 1105--1119, 2020.

\bibitem{lazar2004perfect}
A.~A. Lazar and L.~T. T{\'o}th, ``Perfect recovery and sensitivity analysis of
  time encoded bandlimited signals,'' \emph{IEEE Trans. Circuits Syst. I},
  vol.~51, no.~10, pp. 2060--2073, 2004.

\bibitem{lazar2004time}
A.~A. Lazar, ``Time encoding with an integrate-and-fire neuron with a
  refractory period,'' \emph{Neurocomputing}, vol.~58, pp. 53--58, 2004.

\bibitem{alexandru2019reconstructing}
R.~Alexandru and P.~L. Dragotti, ``Reconstructing classes of non-bandlimited
  signals from time encoded information,'' \emph{IEEE Trans. Signal Process.},
  vol.~68, pp. 747--763, 2019.

\bibitem{naaman2022fri}
H.~Naaman, S.~Mulleti, and Y.~C. Eldar, ``{FRI-TEM: Time encoding sampling of
  finite-rate-of-innovation signals},'' \emph{IEEE Transactions on Signal
  Processing}, 2022.

\bibitem{lazar2011video}
A.~A. Lazar and E.~A. Pnevmatikakis, ``Video time encoding machines,''
  \emph{IEEE Transactions on Neural Networks}, vol.~22, no.~3, pp. 461--473,
  2011.

\bibitem{thao2020time}
N.~T. Thao and D.~Rzepka, ``{Time encoding of bandlimited signals:
  Reconstruction by pseudo-inversion and time-varying multiplierless FIR
  filtering},'' \emph{IEEE Transactions on Signal Processing}, vol.~69, pp.
  341--356, 2020.

\bibitem{adam2021asynchrony}
K.~Adam, A.~Scholefield, and M.~Vetterli, ``Asynchrony increases efficiency:
  Time encoding of videos and low-rank signals,'' \emph{IEEE Transactions on
  Signal Processing}, vol.~70, pp. 105--116, 2021.

\bibitem{naaman2021time}
H.~Naaman, S.~Mulleti, Y.~C. Eldar, and A.~Cohen, ``{Time-Based Quantization
  for FRI and Bandlimited Signals},'' in \emph{2022 30th European Signal
  Processing Conference (EUSIPCO)}.\hskip 1em plus 0.5em minus 0.4em\relax
  IEEE, 2022, pp. 2241--2245.

\bibitem{naaman2022uniqueness}
H.~Naaman, S.~Mulleti, and Y.~C. Eldar, ``{Uniqueness and Robustness of
  Tem-Based FRI Sampling},'' in \emph{2022 IEEE International Symposium on
  Information Theory (ISIT)}.\hskip 1em plus 0.5em minus 0.4em\relax IEEE,
  2022, pp. 2631--2636.

\bibitem{rudresh2020time}
{S. Rudresh, A. J. Kamath, and C. S. Seelamantula}, ``{A Time-Based Sampling
  Framework for Finite-Rate-of-Innovation Signals},'' in \emph{Proc. IEEE Int.
  Conf. Acoust., Speech and Signal Process. (ICASSP)}, 2020, pp. 5585--5589.

\bibitem{gontier2014sampling}
D.~Gontier and M.~Vetterli, ``{Sampling based on timing: Time encoding machines
  on shift-invariant subspaces},'' \emph{Applied and Comput. Harmonic Anal.},
  vol.~36, no.~1, pp. 63--78, 2014.

\bibitem{papoulis1967limits}
A.~Papoulis, ``Limits on bandlimited signals,'' \emph{Proc. of the IEEE},
  vol.~55, no.~10, pp. 1677--1686, 1967.

\bibitem{popoviciu1935equations}
T.~Popoviciu, ``Sur les {\'e}quations alg{\'e}briques ayant toutes leurs
  racines r{\'e}elles,'' \emph{Mathematica}, vol.~9, no. 129-145, p.~20, 1935.

\end{thebibliography}
\end{document}